\documentclass[letterpaper, 10 pt, conference]{ieeeconf}  

\IEEEoverridecommandlockouts                              

\usepackage{graphicx} 
\usepackage{amsmath} 
\usepackage{amssymb}  

\usepackage{amsthm}

\usepackage{booktabs}
\usepackage{siunitx}
\usepackage{algorithm}
\usepackage{algpseudocode}

\newtheorem{assump}{Assumption}
\newtheorem{lem}{Lemma}

\newtheorem{thm}{Theorem}

\newtheorem*{remark}{Remark}

\title{\LARGE \bf
Stable Multi-Step Rollouts via Uncertainty-Guided Hybrid Dynamics*
}

\author{Andrei Maalberg$^{1}$, Axel Neumann$^{1}$ and Jens Knobloch$^{1,2}$
\thanks{*This work was supported by European Commission’s Horizon Europe Research and Innovation programme under Grant Agreement n°101131435}
\thanks{$^{1}$All authors are with Helmholtz-Zentrum Berlin, 14109 Berlin, Germany
        {\tt\small \{andrei.maalberg, axel.neumann, jens.knobloch\}@helmholtz-berlin.de}}%
\thanks{$^{2}$Jens Knobloch is also with the Department of Physics, Universität Siegen, 57068 Siegen, Germany {\tt\small jens.knobloch@uni-siegen.de}
        }%
}

\begin{document}

\maketitle
\thispagestyle{empty}
\pagestyle{empty}

\begin{abstract}

Multi-step rollouts are essential for model-based reinforcement learning (RL) and predictive control, yet learned dynamics models often become unstable when recursively applied, leading to divergence and unreliable policy updates. This paper proposes a model-agnostic hybrid dynamics framework that blends a provably contracting nominal model with a flexible excursion model through an uncertainty-guided switching law. The switching signal is derived from calibrated epistemic uncertainty and activates only when the system leaves the nominal region, ensuring that each model operates within its reliability regime. Under clearly stated smoothness and boundedness assumptions, we show that the resulting hybrid predictor yields globally bounded recursive multi-step rollouts: trajectories remain Lyapunov-stable in the nominal region and exhibit at most affine growth during excursions. To illustrate the theory in practice, we instantiate the hybrid dynamics framework within a model-based RL scheme that uses real one-step transitions for value learning and hybrid rollouts for policy improvement. Experiments on a nonlinear Duffing oscillator demonstrate stable long-horizon prediction and improved cost-effort trade-offs relative to a stabilizing baseline.

\end{abstract}

\section{INTRODUCTION}

Learning-based predictors are increasingly used in control and reinforcement learning (RL) for forecasting, planning, and multi-step decision making~\cite{Muthali_2023_uq, Zifan_2022_p2p, Balim_2025_imitation}. When recursively applied, however, small one-step errors can accumulate, or compound, leading to rollout drift and divergence from the true dynamics~\cite{Janner_2019_model_trust}. From a control-theoretic perspective, this reflects the absence of guarantees on contractivity, spectral properties, or behavior outside the training distribution~\cite{Kolter_2019_lyap}.

Recent work in imitation and RL mitigates compounding error by training direct multi-step predictors rather than recursively iterating a one-step model~\cite{Balim_2025_imitation, Li_2025_multi_step}. While this reduces statistical bias in long-horizon prediction, it sidesteps the question of stability under recursive application. In control settings---where simulation, filtering, and policy evaluation inherently rely on iterated dynamics---such recursive structure cannot be avoided and must be analyzed directly.

Within control theory, a classical remedy is to impose structural restrictions on the model class, for example through linear state-space approximations or Koopman-inspired embeddings of nonlinear dynamics~\cite{Lusch_2018_koopman}. These models enable analysis of long-horizon behavior but often lack the expressiveness needed to capture nonlinear transients or regime-dependent dynamics. In many practical systems---mechanical, biological, or radio frequency---dynamics alternate between nominal regions, where simple linear models suffice, and excursion regimes, where nonlinear effects dominate. A single monolithic learned model struggles to provide both stability in nominal operation and flexibility during excursions.

This paper addresses this tension by proposing a model-agnostic hybrid dynamics framework with uncertainty-triggered switching. The hybrid predictor consists of (i) a contractive nominal model governing stable operation and (ii) a bounded-growth excursion model capable of representing nonlinear transient behavior. A switching signal determines which model is trusted at each step, based on calibrated epistemic uncertainty. When uncertainty with respect to the nominal predictor is low, the dynamics remain in a contracting regime. Elevated uncertainty triggers a bounded excursion governed by a complementary nonlinear model. Once uncertainty decreases, the dynamics return to the nominal regime, where contraction is restored. This creates self-regulating learned dynamics, in which uncertainty acts as the mechanism enforcing stable and bounded recursive multi-step rollouts.

The main contribution of this paper is a theoretical characterization of stability and boundedness for this uncertainty-triggered hybrid predictor. Under clearly stated structural assumptions---contractivity of the nominal model, bounded growth of the excursion model, and calibration of uncertainty---we prove uniform global boundedness of recursive multi-step rollouts, without requiring multi-step prediction accuracy. Inside the nominal region, trajectories satisfy a Lyapunov stability guarantee; outside, they exhibit controlled affine growth before re-entering the nominal basin.

To illustrate the theory in a model-based RL setting, we integrate the hybrid predictor into a policy optimization framework, which we refer to as KIND-RL. In this instantiation, the predictor is implemented using the Kalman-Inspired Neural Decomposition (KIND) architecture introduced in our earlier work~\cite{Maalberg_2026_kind_kalman}. KIND combines a stable Koopman-like nominal predictor with an expressive sequence model for excursions augmented with epistemic uncertainty estimates. Although KIND serves as a concrete implementation within the proposed KIND-RL scheme, the theoretical results apply to any pair of models satisfying the stated assumptions.

Our contributions are:

\begin{enumerate}
  \item A hybrid-system formulation for analyzing uncertainty-triggered switching between contracting and bounded-growth learned dynamics.
  \item A Lyapunov-based stability lemma guaranteeing controlled behavior in the nominal region.
  \item A transient boundedness lemma ensuring finite-horizon safety during excursion phases.
  \item A global boundedness theorem for uncertainty-triggered hybrid rollouts.
  \item A model-based RL instantiation (KIND-RL) that integrates the proposed hybrid predictor into stable long-horizon policy improvement, illustrated on a nonlinear Duffing oscillator.
\end{enumerate}

This theoretical foundation supports applications in model-based RL, where stable recursive multi-step rollouts enable policy improvement beyond the one-step horizon traditionally imposed by the instability of learned models.

\section{RELATED WORK}

Recent work in learning-enabled control integrates data-driven models with classical stability analysis. The present work connects to three areas: (i) operator-theoretic and linear surrogate models, (ii) uncertainty-aware learning for control, and (iii) hybrid and switching dynamical systems.

\subsubsection{Operator-Theoretic and Linear Predictors}

Koopman-based and linear surrogate models have gained traction as reliable approximations of nonlinear dynamics, particularly in stationary regimes where linearization captures the dominant behavior. Approaches such as extended dynamic mode decomposition (DMD) and Koopman operators with neural observables~\cite{Lusch_2018_koopman} provide analyzable linear predictors whose propagation admits spectral or contraction-based guarantees. These models, however, struggle with transient or regime-changing behavior, where global linearity becomes inadequate. Our nominal model follows this operator-theoretic tradition, while our excursion model provides the expressive nonlinear correction needed to capture short-lived departures from stationary behavior. Related hybrid operator structures that combine global and local dynamics have also been explored in forecasting settings~\cite{Wang_2023}; in contrast, the present work analyzes an uncertainty-triggered blending mechanism and provides stability guarantees for the resulting hybrid predictor.

\subsubsection{Uncertainty in Learned Dynamics}

A significant amount of literature uses uncertainty estimates to regulate learning or control decisions. Disturbance learning based on a Gaussian process (GP) has been applied to safe control via reachability or robust model predictive control. For example, authors in~\cite{Akametalu_2014_reachability} and~\cite{Berkenkamp_2017_safe_mbrl} demonstrated that GP posterior variance enables selective conservativeness and safe exploration. Similarly, recent work~\cite{Das_2023_uq_decomposition} on uncertainty decomposition has used neural networks to separate aleatoric and epistemic uncertainty for adaptive data collection or feedback gain tuning. These methods assume known nominal dynamics and use uncertainty to filter control actions, tune feedback gains, or guide data acquisition, while the underlying model structure remains fixed. In contrast, our work employs uncertainty to regulate the model class itself, enabling bounded multi-step prediction in fully learned dynamics without assuming a known nominal system.

\subsubsection{Hybrid and Switching Models}

Hybrid systems traditionally assume a priori partitioned state spaces with sharp mode transitions. Analysis tools such as control Lyapunov functions, control barrier functions, or polynomial chaos-based uncertainty quantification provide guarantees under non-overlapping regime assumptions~\cite{Aali_2024_cbf, Sahai_2012_hybrid}. Recent extensions learn mode-dependent dynamics or switching surfaces using GPs~\cite{Beckers_2023_hamiltonian}. However, these frameworks require discrete mode boundaries and do not support soft blending. The hybrid dynamics studied in this paper differ fundamentally: switching is uncertainty-triggered, continuous, and designed specifically to stabilize long-horizon predictions rather than enforce safety or satisfy hard constraints. To the best of our knowledge, no prior work provides stability guarantees for learned multi-step rollouts in such "soft hybrid" systems.

\subsubsection{Summary}

Existing approaches provide either stable but inflexible linear predictors, uncertainty-aware control under known dynamics, or switching systems with discrete predefined modes. None provide stability guarantees for uncertainty-guided hybrid predictors in fully learned dynamics. The present work develops such guarantees and illustrates them using the KIND architecture, although the analysis is not specific to KIND.

\section{Model-Agnostic Stability Theory}\label{sec:theory}

This section develops a general stability framework for discrete-time hybrid systems whose switching is triggered by an epistemic-uncertainty signal. The results apply to any pair of (i) a contracting nominal model and (ii) a bounded-growth excursion model. They are later instantiated by the learned model used in Section~\ref{sec:app}, but the theory itself is entirely model-agnostic.

\subsection{Problem Setting and Notation}

We consider a discrete-time dynamical system\footnote{For the theoretical analysis, the control input may be generated by a fixed policy $u_t = \pi(x_t)$, but the notation $f(x_t, u_t)$ is kept for generality.}

\begin{equation*}
  x_{t+1} = f(x_t, u_t),
\end{equation*}

with state $x_t \in \mathbb{R}^n$ and control input $u_t \in \mathbb{R}^m$. The framework provides two predictors:

\begin{equation*}
  f^{\text{nom}},  \; f^{\text{exc}} \colon \mathbb{R}^{n+m} \to \mathbb{R}^n,
\end{equation*}

and a nonnegative epistemic-uncertainty estimate

\begin{equation*}
  \zeta(x_t, u_t) \in \mathbb{R}_{\geq 0}.
\end{equation*}

A switching law produces a blending coefficient

\begin{equation*}
  \alpha_t = \sigma(\zeta(x_t, u_t)) \in [0,1],
\end{equation*}

where $\sigma(\cdot)$ is any monotone function satisfying the regime-separation rule

\begin{equation*}
  \zeta(x_t, u_t) \leq \zeta^{\star} \Longrightarrow \alpha_t = 1, \quad \zeta(x_t, u_t) > \zeta^{\star} \Longrightarrow \alpha_t < 1,
\end{equation*}

for some threshold $\zeta^{\star} > 0$. The hybrid predictor evolves as

\begin{equation}
  x_{t+1} = \alpha_t f^{\text{nom}}(x_t, u_t) + (1-\alpha_t) f^{\text{exc}}(x_t, u_t).
  \label{eq:theory_not_predictor}
\end{equation}

Throughout the theoretical development, $x_t$ denotes the state generated by the hybrid predictor under recursive application. The true system dynamics are invoked only when defining the modeling error in Assumption~\ref{thm:theory_a3}; all subsequent theoretical results concern the recursively applied hybrid predictor. In Section~\ref{sec:app}, the same predictor is embedded within the RL algorithm, where real system transitions are used for value learning and predictor states are used for model rollouts.

\subsection{Assumptions}

\begin{assump} [nominal-model contraction]
  \label{thm:theory_a1}

  There exists a continuously differentiable Lyapunov function

  \begin{equation*}
    V \colon \mathbb{R}^n \to \mathbb{R}_{\geq 0},
  \end{equation*}

  and constants $c_1, c_2, c_3 > 0$ such that

  \begin{equation}
    c_1 \lVert x \rVert ^2 \leq V(x) \leq c_2 \lVert x \rVert ^2,
    \label{eq:theory_a_lyap_quad}
  \end{equation}

  and

  \begin{equation}
    V(f^{\text{nom}}(x_t, u_t)) - V(x_t) \leq -c_3 \lVert x_t \rVert ^2,
    \label{eq:theory_a_lyap_dec}
  \end{equation}

  whenever $\zeta(x_t, u_t) \leq \zeta^{\star}$. Since $V$ is continuously differentiable, it is locally Lipschitz. Accordingly, on every compact Lyapunov sublevel set there exists a corresponding Lipschitz constant $L_V$. Thus the nominal subsystem admits a strict quadratic Lyapunov decrease in its confidence region.

\end{assump}

\begin{assump} [excursion-model bounded growth]
  \label{thm:theory_a2}

  There exists $M \geq 1$ such that for all $x_t$,

  \begin{equation*}
    \lVert f^{\text{exc}}(x_t, u_t) \rVert \leq M(1 + \lVert x_t \rVert).
  \end{equation*}

  Thus excursions may grow, but only at most linearly.
\end{assump}

\begin{assump} [uncertainty-calibrated modeling error]
  \label{thm:theory_a3}

  Let the modeling error of the nominal predictor be

  \begin{equation*}
    e(x_t, u_t) := f(x_t, u_t) - f^{\text{nom}}(x_t, u_t).
  \end{equation*}

  There exists a class-$\mathcal{K}$ function $\delta(\cdot)$ such that

  \begin{equation*}
    \lVert e(x_t, u_t) \rVert \leq \delta(\zeta(x_t, u_t)).
  \end{equation*}

  Thus uncertainty upper-bounds the model error.
\end{assump}

\begin{remark}
  In practice, uncertainty estimators may be trained by supervised regression and therefore provide learned proxies rather than certified upper bounds on the modeling error. Consequently, Assumption~\ref{thm:theory_a3} should be interpreted as an idealized calibration condition. If the calibration error is uniformly bounded, i.e.,

  \begin{equation*}
     \lVert e(x_t, u_t) \rVert \leq \delta(\zeta(x_t, u_t)) + \eta,
  \end{equation*}

  for some $\eta \ge 0$, the subsequent boundedness analysis remains qualitatively unchanged, with only the size of the resulting practical neighborhood increasing.
\end{remark}

\begin{assump} [finite excursion recovery]
  \label{thm:theory_a4}

  Under the closed-loop control policy, every interval during which $\zeta(x_t,u_t) > \zeta^\star$
has uniformly bounded duration. That is, there exists $H<\infty$ such that

  \begin{equation*}
    \zeta(x_t,u_t) > \zeta^\star \;\Rightarrow\; \exists \, k \le H: \zeta(x_{t+k},u_{t+k}) \le \zeta^\star.
  \end{equation*}

  Moreover, the closed-loop dynamics recover to the nominal-confidence region after each excursion, so that excursions remain transient and do not accumulate into persistent growth.
\end{assump}

\begin{remark}
  Assumption~\ref{thm:theory_a4} permits infinitely many switching events; it only requires that each excursion is transient and followed by recovery to the nominal-confidence region. This is analogous to bounded dwell-time assumptions in switched systems and models disturbances or regime transitions that eventually decay or are rejected by the closed-loop controller.
\end{remark}

\begin{assump} [bounded-gain control policy]
  \label{thm:theory_a5}
  
  Let the control input applied to the system be
  \begin{equation*}
    u_t = \pi(x_t).
  \end{equation*}
 
 There exist constants $c_u > 0$ and $c_0 \geq 0$ satisfying
  
  \begin{equation*}
    \lVert u_t \rVert \leq c_u \lVert x_t \rVert + c_0, \quad \forall t.
  \end{equation*}

\end{assump}

\subsection{Lemmas and Theorem}

We present the main lemmas and the global boundedness theorem below. All proofs are deferred to Appendix.

\begin{lem} [practical nominal contraction]
  \label{thm:theory_lemma1}

  Under Assumptions~\ref{thm:theory_a1} and~\ref{thm:theory_a3}, if $\zeta(x_t,u_t) \le \zeta^\star$, then

  \begin{equation*}
    V(x_{t+1}) - V(x_t) \leq - c_3 \lVert x_t \rVert ^2 + \varepsilon,
  \end{equation*}

  where $\varepsilon := L_V \delta(\zeta^\star)$. In particular, there exists a radius

  \begin{equation*}
    r := \sqrt{\frac{2\varepsilon}{c_3}}
  \end{equation*}

  such that whenever $ \lVert x_t \rVert \geq r$,

  \begin{equation*}
    V(x_{t+1}) - V(x_t) \leq -\frac{1}{2} \, c_3 \, \lVert x_t \rVert^2.
  \end{equation*}
  
  Thus, within the nominal-confidence region, the true dynamics exhibit strict Lyapunov decrease outside a compact neighborhood of the origin.
  \end{lem}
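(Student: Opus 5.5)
The plan is to interpret $x_{t+1}$ in the lemma as the true successor $f(x_t,u_t)$, as the lemma's closing sentence indicates. We compare it with the nominal successor through the modeling error of Assumption~\ref{thm:theory_a3}. Since $\zeta(x_t,u_t)\le\zeta^\star$, the regime-separation rule gives $\alpha_t=1$, so the hybrid update~\eqref{eq:theory_not_predictor} reduces to $f^{\text{nom}}(x_t,u_t)$. We then write
\begin{equation*}
  V(x_{t+1}) - V(x_t) = \bigl[V(f^{\text{nom}}(x_t,u_t)) - V(x_t)\bigr] + \bigl[V(f(x_t,u_t)) - V(f^{\text{nom}}(x_t,u_t))\bigr].
\end{equation*}
The first bracket is bounded by $-c_3\lVert x_t\rVert^2$ directly from~\eqref{eq:theory_a_lyap_dec}, since we are inside the confidence region.

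For the second bracket we use the local Lipschitz property of $V$ from Assumption~\ref{thm:theory_a1}. This requires both $f(x_t,u_t)$ and $f^{\text{nom}}(x_t,u_t)$ to lie in a common compact sublevel set $\Omega_c=\{V\le c\}$ on which $L_V$ is valid. Given that, $|V(f)-V(f^{\text{nom}})|\le L_V\lVert e(x_t,u_t)\rVert$. Assumption~\ref{thm:theory_a3} then bounds this by $L_V\delta(\zeta(x_t,u_t))$. Because $\delta$ is class-$\mathcal{K}$, hence monotone increasing, this is at most $L_V\delta(\zeta^\star)=\varepsilon$. Summing the two brackets gives the first claim.

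The main obstacle is justifying the common compact set. I will fix a sublevel set $\Omega_c$ containing the region of interest, namely the current state and the trajectory segment under consideration. On $\Omega_c$, the decrease~\eqref{eq:theory_a_lyap_dec} puts $f^{\text{nom}}(x_t,u_t)$ in $\Omega_c$ whenever $x_t\in\Omega_c$. The true successor lies within distance $\delta(\zeta^\star)$ of it, so I enlarge to a slightly bigger sublevel set $\Omega_{c'}$ that contains the $\delta(\zeta^\star)$-neighbourhood of $\Omega_c$. This enlargement is possible by the quadratic bounds~\eqref{eq:theory_a_lyap_quad}, which make sublevel sets comparable to balls. I then take $L_V$ as the Lipschitz constant on the convex hull of $\Omega_{c'}$, which is still compact. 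I will state explicitly that $L_V$ refers to this set, consistent with the phrase ``on every compact Lyapunov sublevel set'' in Assumption~\ref{thm:theory_a1}.

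For the second claim, if $\lVert x_t\rVert\ge r=\sqrt{2\varepsilon/c_3}$ then $\varepsilon\le\tfrac12 c_3\lVert x_t\rVert^2$. Hence $-c_3\lVert x_t\rVert^2+\varepsilon\le-\tfrac12c_3\lVert x_t\rVert^2$. This is strict decrease outside the compact ball of radius $r$, which is the asserted neighbourhood of the origin.
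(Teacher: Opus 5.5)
Your proposal is correct and follows the paper's proof essentially step for step: the same add-and-subtract split around $f^{\text{nom}}(x_t,u_t)$, the nominal decrease from Assumption~\ref{thm:theory_a1}, the Lipschitz bound $L_V\lVert e(x_t,u_t)\rVert$ combined with the class-$\mathcal{K}$ monotonicity of $\delta$ to obtain $\varepsilon = L_V\delta(\zeta^\star)$, and the same radius computation. Your explicit construction of the compact set on which $L_V$ is valid (enlarging the sublevel set via the quadratic bounds, then taking its convex hull so the segment between $f$ and $f^{\text{nom}}$ is covered) tightens a point the paper only asserts.
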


\begin{lem} [finite-horizon boundedness]
  \label{thm:theory_lemma2}
  Under Assumption~\ref{thm:theory_a2}, for any sequence of $k \leq H$ consecutive excursion steps,

  \begin{equation*}
    \lVert x_{t+k} \rVert \, \leq \, M^k \lVert x_t \rVert \,+\, M \frac{M^k - 1}{M - 1}.
  \end{equation*}

  Thus even if the excursion dynamics do not satisfy contraction, their growth is uniformly bounded over any finite horizon determined by $H$.
\end{lem}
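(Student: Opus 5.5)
The plan is to reduce the statement to a one-step affine recursion and then unroll it by induction over the excursion steps. The first step is the one-step bound. For a single excursion step $s \in \{t, \dots, t+k-1\}$, i.e.\ $\zeta(x_s,u_s) > \zeta^\star$, the hybrid update \eqref{eq:theory_not_predictor} is taken to be governed by the excursion model. Assumption~\ref{thm:theory_a2} then gives
\begin{equation*}
  \lVert x_{s+1} \rVert \le M\lVert x_s \rVert + M .
\end{equation*}

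The main obstacle is that on an excursion step only $\alpha_s < 1$ is guaranteed, so $x_{s+1}$ may still contain a nonzero fraction $\alpha_s f^{\text{nom}}(x_s,u_s)$. Assumption~\ref{thm:theory_a1} controls $f^{\text{nom}}$ only inside the confidence region, so it gives nothing here. I would handle this in one of two ways.

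\emph{First option.} Read the lemma as concerning excursion steps in the strict sense $\alpha_s = 0$. This is the case the statement is phrased for, since the bound involves only $M$.

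\emph{Second option.} Impose the mild supplementary requirement that $f^{\text{nom}}$ obeys the same affine growth bound, $\lVert f^{\text{nom}}(x,u)\rVert \le M(1+\lVert x\rVert)$. This is natural for a linear, Koopman-like nominal model once $M$ is enlarged to dominate its gain together with the $c_u$, $c_0$ term from Assumption~\ref{thm:theory_a5}. The triangle inequality applied to the convex combination then yields
\begin{equation*}
  \lVert x_{s+1}\rVert \le \alpha_s M(1+\lVert x_s\rVert) + (1-\alpha_s) M(1+\lVert x_s\rVert) = M(1+\lVert x_s\rVert),
\end{equation*}
so the one-step bound holds in either reading.

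The second step is the induction on $j = 0,\dots,k$. The claim to prove is
\begin{equation*}
  \lVert x_{t+j}\rVert \le M^j \lVert x_t\rVert + M\sum_{i=0}^{j-1} M^i .
\end{equation*}
The base case $j=0$ is trivial. For the inductive step, substitute the hypothesis into the one-step bound:
\begin{equation*}
  \lVert x_{t+j+1}\rVert \le M\Bigl(M^j\lVert x_t\rVert + M\sum_{i=0}^{j-1}M^i\Bigr) + M = M^{j+1}\lVert x_t\rVert + M\sum_{i=0}^{j}M^i .
\end{equation*}

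Finally, sum the geometric series, $\sum_{i=0}^{k-1}M^i = (M^k-1)/(M-1)$ for $M>1$, which gives the stated inequality. The degenerate case $M=1$ needs a remark: the quotient is then read as its limit $k$, giving $\lVert x_{t+k}\rVert \le \lVert x_t\rVert + k$. Because $k \le H$ by Assumption~\ref{thm:theory_a4}, the right-hand side is bounded by $M^H\lVert x_t\rVert + M(M^H-1)/(M-1)$, uniformly in the excursion. This uniform bound is the form in which the lemma will later feed into the global theorem.
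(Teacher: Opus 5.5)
Your proof is correct and is essentially the paper's. Both derive the affine one-step bound $\lVert x_{s+1}\rVert \le M + M\lVert x_s\rVert$ from Assumption~\ref{thm:theory_a2}, unroll it (you by a clean induction, the paper by expanding the first few iterates), sum the geometric series, and treat $M=1$ as $\lVert x_{t+k}\rVert \le \lVert x_t\rVert + k$. Your first reading, $\alpha_s = 0$ on excursion steps, is exactly what the paper tacitly assumes (its proof of Theorem~\ref{thm:theory_theorem1} says the system simply ``follows'' $f^{\text{exc}}$ when $\zeta > \zeta^\star$), so your remark correctly flags a point the paper's proof passes over: the switching law only guarantees $\alpha_s < 1$, and otherwise some growth bound on $f^{\text{nom}}$ outside the confidence region is needed.
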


\begin{thm} [global boundedness under uncertainty-triggered switching]
  \label{thm:theory_theorem1}
  Under Assumptions~\ref{thm:theory_a1}--\ref{thm:theory_a5}, all trajectories of the hybrid system defined by (\ref{eq:theory_not_predictor}) satisfy

  \begin{equation*}
    \sup_t \, \lVert x_t \rVert < \infty.
  \end{equation*}

  Thus the uncertainty-triggered hybrid system is globally bounded, i.e., there exists a finite constant $C > 0$ such that $\|x_t\| \leq C$ for all $t$.
\end{thm}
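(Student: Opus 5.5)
The plan is to split every trajectory of (\ref{eq:theory_not_predictor}) into alternating phases: \emph{nominal phases}, where $\zeta(x_t,u_t)\le\zeta^\star$ and hence $\alpha_t=1$, and \emph{excursion phases}, where $\zeta(x_t,u_t)>\zeta^\star$. A single uniform bound is then built by combining Lemma~\ref{thm:theory_lemma1} on nominal steps with Lemma~\ref{thm:theory_lemma2} on excursion steps. The quantity tracked across phases is the Lyapunov value $V(x_t)$ rather than $\lVert x_t\rVert$, because the decrease estimate is stated in $V$. The quadratic sandwich (\ref{eq:theory_a_lyap_quad}) converts between the two whenever needed.

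\textbf{Nominal phases.} First I show that the sublevel set $\Omega_R:=\{V\le R\}$, with $R\ge c_2 r^2$ suitably large, is forward invariant under nominal steps, up to a fixed overshoot.
\begin{itemize}
\item If $\lVert x_t\rVert\ge r$, Lemma~\ref{thm:theory_lemma1} gives $V(x_{t+1})\le V(x_t)$.
\item If $\lVert x_t\rVert<r$, Lemma~\ref{thm:theory_lemma1} gives $V(x_{t+1})\le c_2 r^2+\varepsilon$.
\end{itemize}
Hence, throughout any nominal phase,
\[
V(x_{t+1})\le \max\{V(x_t),\,c_2 r^2+\varepsilon\}.
\]
So along a nominal phase $V$ never exceeds the maximum of its entry value and $\bar V:=c_2r^2+\varepsilon$. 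Moreover, $V$ strictly decreases by at least $\tfrac12 c_3\lVert x_t\rVert^2\ge \tfrac{c_3}{2c_2}V(x_t)$ while $\lVert x_t\rVert\ge r$. This geometric decrease, $V(x_{t+1})\le(1-\tfrac{c_3}{2c_2})V(x_t)$, is what will absorb the growth incurred during excursions. It requires nominal phases to last long enough, and that is where the recovery part of Assumption~\ref{thm:theory_a4} enters.

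\textbf{Excursion phases.} By Assumption~\ref{thm:theory_a4}, each excursion lasts at most $H$ steps. On a mixed step with $\alpha_t\in(0,1)$, the contraction bound for $f^{\text{nom}}$ is not available, since $\zeta>\zeta^\star$. To bound $\lVert f^{\text{nom}}\rVert$ there I would use Assumption~\ref{thm:theory_a3} together with Assumption~\ref{thm:theory_a5}, taking a linear-growth bound on the true $f$ under the bounded-gain policy. If that is unavailable, I would treat the convex combination as inheriting an affine bound $\lVert x_{t+1}\rVert\le M'(1+\lVert x_t\rVert)$. Lemma~\ref{thm:theory_lemma2} with $M'$ then yields
\[
\lVert x_{t+k}\rVert\le M'^H\lVert x_t\rVert+M'\tfrac{M'^H-1}{M'-1}.
\]
Squaring and using (\ref{eq:theory_a_lyap_quad}) gives an affine map $\Phi$ on the $V$-scale with
\[
V_{\text{exit}}\le \Phi(V_{\text{entry}}):=a\,V_{\text{entry}}+b .
\]

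\textbf{Gluing the phases.} Each excursion is entered from a nominal phase whose $V$ is bounded by the invariant level. I would define the candidate bound
\[
C^2:=\Phi\bigl(\max\{V(x_0),\bar V\}\bigr)/c_1
\]
and argue inductively over phase cycles that $V$ at the start of every nominal phase is at most $\Phi(\max\{V(x_0),\bar V\})$. Inside nominal phases $V$ stays below that level by the invariance above. Inside excursions the norm stays below the Lemma~\ref{thm:theory_lemma2} envelope.

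\textbf{Main obstacle.} The main difficulty is closing this induction. Repeated excursions could in principle ratchet $V$ upward by the factor $a$ each cycle. What prevents this is the clause of Assumption~\ref{thm:theory_a4} stating that excursions ``do not accumulate into persistent growth.'' I would make it precise as the requirement that, after each excursion, the trajectory returns to $\Omega_{R^\star}$ for a fixed level $R^\star$. Alternatively, one can require the subsequent nominal phase to be long enough that the factor $(1-\tfrac{c_3}{2c_2})^{N}a\le1$. With either form, the entry levels are uniformly bounded, and $\sup_t\lVert x_t\rVert\le C$ follows.
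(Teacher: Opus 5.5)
Your proposal follows essentially the same route as the paper's proof: Lemma~\ref{thm:theory_lemma1} on nominal phases, Lemma~\ref{thm:theory_lemma2} on excursions of length at most $H$, and the recovery/no-accumulation clause of Assumption~\ref{thm:theory_a4} to glue the phases, which is exactly what the paper's step~3 invokes. You are more explicit than the paper about the multiplicative ratchet and about mixed steps with $\alpha_t\in(0,1)$; the paper silently treats these as pure $f^{\text{exc}}$ steps, so your affine bound $M'$ is a hypothesis both arguments tacitly need. One small fix: in your dwell-time variant you need $a(1-\tfrac{c_3}{2c_2})^{N}<1$ strictly, with an entry level $L\ge\max\{b/(1-a(1-\tfrac{c_3}{2c_2})^{N}),\,a\bar V+b\}$ replacing your first candidate $C^2$, because with equality the additive term $b$ still lets successive entry levels creep upward linearly.
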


\subsection{Interpretation}

Uncertainty acts as a state-dependent switching signal.

\begin{itemize}
  \item When the nominal model is confident: $V$ decreases $\rightarrow$ contraction dominates.
  \item When uncertainty rises: switching occurs $\rightarrow$ excursions may expand but only for a uniformly bounded duration.
\end{itemize}

Consequently, the hybrid trajectory remains uniformly bounded.

\section{Application to Model-Based RL Using KIND}\label{sec:app}

\subsection{Instantiating the Hybrid Predictor with KIND}

KIND maintains two learned predictors~\cite{Maalberg_2026_kind_kalman} corresponding to the two behavioral regimes assumed in Section~\ref{sec:theory}:

\begin{enumerate}
  \item Nominal model $f^{\text{nom}}(x_t,u_t)$: a locally-linear Koopman/DMD operator trained exclusively on in-distribution (ID) data, i.e., trajectories around the equilibrium and normal operating conditions.

  \item Excursion model $f^{\text{exc}}(x_t,u_t)$: a nonlinear Transformer trained only on out-of-distribution (OOD) segments such as impacts, kicks, limit-cycle oscillations, or abrupt behavioral changes.
\end{enumerate}

The nominal model provides strong local accuracy and inherent stability, while the excursion model captures nonlinear transients and regime-dependent effects beyond the representational capacity of a linear operator. Both nominal and excursion predictors operate on a finite lookback window of past state-control pairs. During recursive rollouts, this window is updated in a sliding fashion and periodically re-anchored to real replay states to prevent drift accumulation.

To drive the switching logic, KIND associates each predictor $f^b \mid b \in \{ \text{nom}, \text{exc} \}$ with a learned epistemic uncertainty module

\begin{equation*}
  \zeta^b(x_t, u_t) \in \mathbb{R}_{\geq 0}.
\end{equation*}

Each module is trained to predict the magnitude of the one-step modeling error of its corresponding predictor,

\begin{equation*}
  \zeta^b(x_t, u_t) \approx \lVert x_{t+1} - f^b(x_t, u_t) \rVert,
\end{equation*}

using mixed ID and OOD data during supervised training. At runtime, the uncertainty module provides a scalar confidence signal without access to the true next state. Thus, $\zeta^b$ acts as a learned proxy for epistemic modeling error, enabling regime selection during recursive rollouts.

Section~\ref{sec:theory} assumed a general switching map $\alpha_t = \sigma(\zeta(x_t, u_t))$ that selects the nominal model when uncertainty is low. KIND instantiates this switching using uncertainty-normalized blending:

\begin{equation*}
  \alpha_t = \frac{\zeta^{\text{exc}}_t}{\zeta^{\text{exc}}_t + \zeta^{\text{nom}}_t}\; \in [0,1].
\end{equation*}

This normalized uncertainty weighting assigns higher authority to the model whose associated uncertainty is lower, without assuming probabilistic noise models or Gaussian structure. This choice has three desirable properties:

\begin{enumerate}
  \item Nominal dominance in ID regions. If $\zeta_t^{\text{nom}} \ll \zeta_t^{\text{exc}}$, then $\alpha_t \approx 1$, recovering the contraction conditions required by Assumption~\ref{thm:theory_a1}.

  \item Excursion dominance in OOD regions. If the nominal uncertainty spikes, $\alpha_t$ decreases and $f^{\text{exc}}$ takes over.

  \item Suppression of spurious regime switching. If the excursion model is uncertain in an ID region, its large $\zeta^{\text{exc}}$ drives $\alpha_t\rightarrow 1$, keeping the system within the stable nominal dynamics.
\end{enumerate}

Thus the uncertainty-normalized blending implements the theoretical regime-separation logic automatically, without requiring manually tuned switching thresholds.

Control is implemented as a bounded residual policy on top of a fixed stabilizing baseline:

\begin{equation*}
  u_t = u_t^{\text{LQR}} + \Delta u_t^{\text{RL}},
\end{equation*}

where $u_t^{\text{LQR}} = -K x_t$ is a stabilizing LQR controller and $\Delta u_t^{\text{RL}} = \Delta \pi_{\phi}(x_t)$ is a learned residual term parametrized by $\phi$. The residual is explicitly bounded by design:

\begin{equation*}
  \lVert \Delta u_t^{\text{RL}} \rVert \leq u_{\text{max}}.
\end{equation*}

As a result, the applied control satisfies

\begin{equation*}
  \lVert u_t \rVert \leq \lVert K \rVert \lVert x_t \rVert + u_{\text{max}},
\end{equation*}

which corresponds to Assumption~\ref{thm:theory_a5} with $c_u = \lVert K \rVert$ and $c_0 = u_{\text{max}}$.

The resulting hybrid predictor

\begin{equation*}
  x_{t+1} = \Gamma(x_t, u_t) = \alpha_t f^{\text{nom}}(x_t, u_t) + (1 - \alpha_t) f^{\text{exc}}(x_t, u_t),
\end{equation*}

obtained by instantiating the model-agnostic framework (\ref{eq:theory_not_predictor}) of Section~\ref{sec:theory}, serves as the learned dynamics model for model-based planning in Section~\ref{sec:app_rl}. Table~\ref{tab:app_kind_theory_map} summarizes how the abstract assumptions introduced in Section~\ref{sec:theory} are realized by the KIND architecture. While several assumptions are enforced by construction, others represent modeling assumptions or approximate properties of the learned components.

\begin{table}[t]
  \caption{Correspondence between the theoretical assumptions of Section~\ref{sec:theory} and their realization in KIND.}
  \label{tab:app_kind_theory_map}
  \centering
  \begin{tabular}{p{0.18\linewidth} p{0.52\linewidth} p{0.15\linewidth}}
  \toprule
Theory & KIND realization & Nature \\
  \midrule
Assumption~\ref{thm:theory_a1}
& Koopman-like nominal predictor governing normal operation
& By design \\

Assumption~\ref{thm:theory_a2}
& Transformer excursion model with bounded-growth architecture
& By design \\

Assumption~\ref{thm:theory_a3}
& Learned uncertainty estimator providing a proxy for modeling error
& Approximate \\

Assumption~\ref{thm:theory_a4}
& Transient switching between nominal and excursion regimes
& Operating assumption \\

Assumption~\ref{thm:theory_a5}
& Residual RL on top of stabilizing LQR baseline
& By design \\
  \bottomrule
  \end{tabular}
\end{table}

\subsection{Stable Policy Improvement with Hybrid Rollouts}\label{sec:app_rl}

The hybrid predictor $\Gamma(x, u)$ defined above enables multi-step model-based policy improvement while preserving the stability properties established in Section~\ref{sec:theory}. We adopt a two-phase learning procedure which separates value estimation from policy improvement. Importantly, value learning uses only real transitions, whereas model rollouts are used exclusively for policy improvement. This separation prevents model bias from contaminating value estimation, while still allowing long-horizon credit assignment during policy improvement.

The value function $V_\theta(x)$ is trained on real one-step transitions collected from the true system. For a replay sample $(x_t, u_t, r_t, x_{t+1})$, a temporal-difference (TD) target is

\begin{equation*}
  y_t = r_t + \gamma V_\theta(x_{t+1}),
\end{equation*}

where $r_t = \mathcal{R}(x_t, u_t)$ denotes an instantaneous reward and $\gamma \in (0,1)$ is a discount factor. Parameters $\theta$ are updated by minimizing the squared Bellman residual

\begin{equation*}
  \mathcal{L}_V = \| V_\theta(x_t) - y_t \|^2.
\end{equation*}

The TD update corresponds to regression toward the fixed point of the Bellman operator associated with the current fixed policy.

In contrast, the $\Delta \pi_{\phi}(x)$ policy updates are computed using multi-step model rollouts. Starting from a replay state $x_t$, we generate a closed-loop $H$-step rollout using $\Gamma(x, u)$ and the current policy

\begin{equation*}
  \pi(x) = -Kx + \Delta \pi_\phi(x),
\end{equation*}

such that the predicted states satisfy

\begin{equation*}
  x_{t+k+1} = \Gamma(x_{t+k}, \pi(x_{t+k})).
\end{equation*}

The $H$-step model rollouts are used to estimate advantage

\begin{equation*}
  A_t^{(H)} = \sum_{k=0}^{H-1} \gamma^k \mathcal{R} (x_{t+k}, \pi(x_{t+k}))
  + \gamma^H V_\theta(x_{t+H})
  - V_\theta(x_t).
\end{equation*}

Policy parameters $\phi$ are updated by ascending the empirical advantage, implemented as minimizing a surrogate loss

\begin{equation*}
  \mathcal{L}_{\Delta\pi} = - \mathbb{E}\big[ A_t^{(H)} \big].
\end{equation*}

This corresponds to a deterministic policy-gradient step using hybrid model rollouts for multi-step credit assignment. Since the hybrid predictor satisfies global boundedness, the $H$-step rollout remains uniformly bounded for any finite horizon, ensuring well-defined policy gradients. In practice though, rollouts are periodically re-initialized from replay states to limit the accumulation of prediction bias during policy optimization, following common practice in model-based RL.

\section{EXPERIMENTS}\label{sec:exp}

\subsection{Duffing Oscillator Setup}

We consider the controlled Duffing oscillator

\begin{align*}
  \dot{x}_1 &= x_2, \\
  \dot{x}_2 &= -\kappa x_2 + \mu x_1 - \beta x_1^3 + \rho \cos(\omega t) + u,
\end{align*}

with parameters specified in Table~\ref{tab:exp_duffing_param}.

\begin{table}[!htb]
  \caption{Duffing parameters.}
  \label{tab:exp_duffing_param}
    \centering
    \begin{tabular}{crl}
        \toprule
        Parameter        & Value & Description  \\
        \midrule
        $\mu$           & 110.0 & linear stiffness \\
        $\beta$             & 140.0 & nonlinear stiffness \\
        $\rho$        & 70.0   & amplitude of driving force \\
        $\kappa$            & 0.5 & damping \\
        $\omega$            & 1.2 & angular frequency of driving force \\
        $\Delta t$            & 0.01 & discretization step \\
        \bottomrule
    \end{tabular}
\end{table}

The control objective is regulation toward the reference point $x^\star = [1, 0]^\top$. The reward is defined as
\begin{equation*}
  \mathcal{R}(x,u) = - (x-x^\star)^\top Q \, (x-x^\star) - u^\top R \,u,
\end{equation*}

where $Q=I$ and $R=\lambda I$. A stabilizing linear-quadratic regulator (LQR) is designed around the linearization at $x^\star$ using perturbed stiffness parameters to introduce controlled model mismatch.
The LQR is not intended as a performance benchmark, but rather serves two structural purposes:

\begin{enumerate}
  \item It guarantees a stabilizing baseline satisfying the bounded-gain condition of Assumption~\ref{thm:theory_a5}.

  \item It provides a nominal control signal for training the hybrid predictor in both nominal and excursion regimes.
\end{enumerate}

The LQR gains are selected to avoid fully suppressing inter-well transitions, thereby preserving the nonlinear switching behavior required to evaluate the uncertainty-driven hybrid architecture.

\subsection{Hybrid Rollout Stability}

We evaluate the long-horizon rollout behavior of the KIND hybrid predictor. The result is shown in Figure~\ref{fig:exp_rollout_kind}. The rollout spans 600 time steps. Due to the finite lookback structure of the predictor, the internal context window is re-anchored to real states every 20 steps.

Figure~\ref{fig:exp_rollout_kind}(a) shows the true Duffing trajectory and the recursive hybrid prediction. The predicted trajectory remains globally bounded and tracks the qualitative regime transitions without divergence, despite inter-well excursions.

Figure~\ref{fig:exp_rollout_kind}(b) depicts the nominal uncertainty $\zeta^{\text{nom}}$. Uncertainty remains low inside the nominal basin around $x^\star$, increases sharply during inter-well transitions, and decreases again upon re-entry into the nominal region.

Figure~\ref{fig:exp_rollout_kind}(c) shows the blending coefficient $\alpha$. As nominal uncertainty rises, authority shifts from the nominal predictor toward the excursion model. When the system returns to the nominal basin, $\alpha$ recovers toward unity, restoring contraction.

These observations empirically support the regime-separation mechanism of Section~\ref{sec:theory} and demonstrate bounded multi-step prediction under uncertainty-guided switching.
 
\begin{figure}[!htb]
    \centering
    \includegraphics[width=0.84\linewidth]{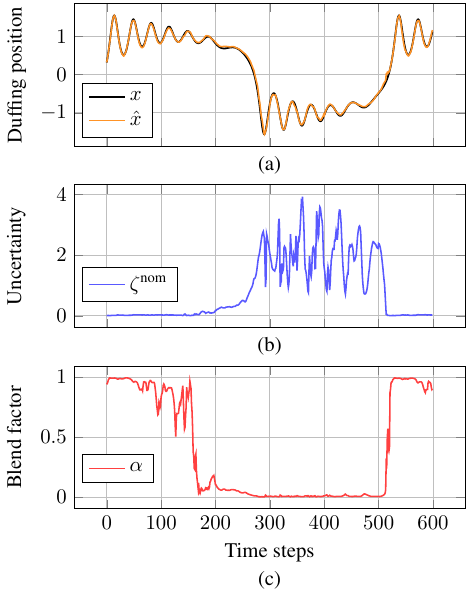}
    \caption{Long-horizon hybrid rollout with re-anchoring every 20 steps. (a) True Duffing position $x$ and recursive KIND prediction $\hat{x}$ over 600 steps. (b) Nominal uncertainty $\zeta^{\text{nom}}$. (c) Blending coefficient $\alpha$.
Uncertainty spikes during inter-well excursions, triggering a temporary shift toward the excursion model,
while predictions remain globally bounded.}
    \label{fig:exp_rollout_kind}
\end{figure}

\subsection{RL with Hybrid Rollouts}

We next demonstrate that the proposed predictor can be used for multi-step policy improvement within a model-based RL framework. Figure~\ref{fig:exp_replay_kindrl} compares the underlying LQR baseline with the residual policy obtained using hybrid rollouts with $H=200$. Relative to the baseline, the learned residual modifies the intra-well oscillation patterns. In particular, oscillations around the reference well $x^\star$ become shorter in duration, while oscillations in the opposite potential well are slightly prolonged. At the same time, control oscillations are visibly attenuated.

\begin{figure}[!htb]
    \centering
    \includegraphics[width=0.84\linewidth]{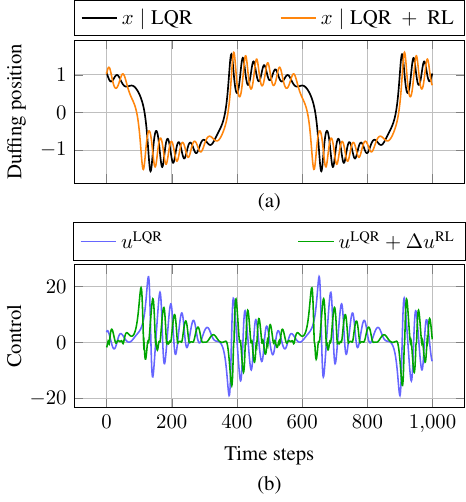}
    \caption{Behavior of residual RL policy during a replay compared to baseline LQR. (a) Residual policy makes dwell time in reference well $x^\star$ shorter compared to the opposite well. (b) Control effort is attenuated when residual policy augments LQR.}
    \label{fig:exp_replay_kindrl}
\end{figure}

The influence of rollout horizon is illustrated in Figure~\ref{fig:exp_replay_advantage} for $H \in \{2, 200\}$. Increasing $H$ changes the scale of the estimated advantage while preserving stability. Longer horizons produce smoother policy updates and reduced control oscillations, consistent with longer-horizon credit assignment.

\begin{figure}[!htb]
    \centering
    \includegraphics[width=0.84\linewidth]{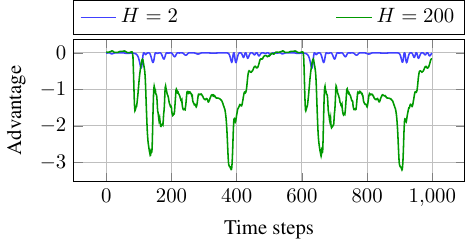}
    \caption{Estimated advantage along a replay trajectory for short ($H=2$) and long ($H=200$) hybrid rollouts. Increasing the rollout horizon substantially increases the magnitude and smoothness of the advantage signal, while preserving stability of the underlying trajectory.}
    \label{fig:exp_replay_advantage}
\end{figure}

Table~\ref{tab:exp_lqr_vs_kindrl} illustrates that the proposed hybrid predictor supports effective policy improvement, yielding lower cumulative cost and control effort than the stabilizing baseline. The cumulative cost refers to the finite-horizon sum of the quadratic stage cost used for policy evaluation. Meanwhile, dwell denotes average residence duration (in seconds) within the reference well. Hybrid policy results are reported as mean $\pm$ one standard deviation over five random seeds. During these experiments, the KIND predictor was fixed; thus randomness arose only from the RL training process.

\begin{table}[!htb]
  \caption{Reduction of cumulative cost, effort and dwell by KIND-RL.}
  \label{tab:exp_lqr_vs_kindrl}
    \centering
    \begin{tabular}{llll}
        \toprule
        Policy         & Cost & Effort & Dwell  \\
        \midrule
        LQR                               & $5.01$ & $7.22$ & $34.8$ \\
        LQR + RL : H=2            & $3.52 \pm 0.02$ & $5.97 \pm 0.02$ & $28.6 \pm 1.6$ \\
        LQR + RL : H=200        & $3.50 \pm 0.01$ & $5.95 \pm 0.01$ & $26.9 \pm 0.5$ \\
        \bottomrule
    \end{tabular}
\end{table}

\section{CONCLUSION}

We proposed a model-agnostic hybrid dynamics framework that blends a contracting nominal model with a flexible excursion model through an uncertainty-guided switching law. Under clearly stated structural assumptions, we established global boundedness of the resulting predictor and characterized conditions under which recursive multi-step rollouts remain well-defined.

The framework was instantiated within a model-based RL scheme, where the hybrid predictor serves as a learned dynamics model for stable long-horizon policy improvement. Experiments on a nonlinear Duffing oscillator demonstrated bounded regime transitions, adaptive uncertainty-guided blending, and improved cost-effort trade-offs relative to a stabilizing baseline.

The proposed formulation provides a principled mechanism for combining local stability guarantees with the modeling flexibility required to capture nonlinear transient dynamics. Future work will investigate adaptive uncertainty calibration, stability and performance guarantees under online policy adaptation, and extensions to higher-dimensional and partially observed systems.

\section*{Declaration of AI assistance}

The OpenAI's ChatGPT~\cite{OpenAI_ChatGPT} was used to improve the syntax and grammar of several paragraphs in the manuscript.


\bibliographystyle{IEEEtran}  
\bibliography{root}

\section*{APPENDIX}\label{sec:appendix}

\begin{proof}[Proof of Lemma~\ref{thm:theory_lemma1}]

  Let the true closed-loop dynamics be
  
  \begin{equation*}
    x_{t+1} = f(x_t, u_t) = f^{\text{nom}}(x_t, u_t) + e(x_t, u_t),
  \end{equation*}
  
  where $e(x_t, u_t)$ is the modeling error of the nominal predictor. Assume $\zeta(x_t, u_t) \leq \zeta^{\star}$, i.e., the system is in the nominal-confidence region. We evaluate the Lyapunov difference along the true dynamics:

  \begin{equation}
    \begin{split}
      V(x_{t+1}) - V(x_t) = V(f^{\text{nom}}(x_t, u_t) + e(x_t, u_t)) - V(x_t) \\
      = \underbrace{V(f^{\text{nom}}(x_t, u_t)) - V(x_t)}_{\text{nominal contraction}} \\
      +
      \underbrace{V(f^{\text{nom}}(x_t, u_t) + e(x_t, u_t)) - V(f^{\text{nom}}(x_t, u_t))}_{\text{error-induced deviation}}.
    \end{split}
    \label{eq:app_lemma1_split}
  \end{equation}
  
  This is simply an 'add and subtract $f^{\text{nom}}(x_t, u_t)$' trick that isolates nominal contraction from modeling error.

  Assumption~\ref{thm:theory_a1} gives, whenever $\zeta(x_t, u_t) \leq \zeta^{\star}$:

  \begin{equation}
    V(f^{\text{nom}}(x_t, u_t)) - V(x_t) \leq -c_3 \lVert x_t \rVert^2.
    \label{eq:app_lemma1_contract}
  \end{equation}

  Moreover, since $V \in C^1$, it is locally Lipschitz. On compact Lyapunov sublevel sets, a uniform Lipschitz constant $L_V$ exists:
  
  \begin{equation*}
    |V(a) - V(b)| \leq L_V \lVert a-b  \rVert.
  \end{equation*}

  Applying this to the error term in (\ref{eq:app_lemma1_split}):

  \begin{equation}
    \begin{split}
      V(f^{\text{nom}}(x_t, u_t) + e(x_t, u_t)) - V(f^{\text{nom}}(x_t, u_t)) \\
      \leq L_V \lVert e(x_t, u_t) \rVert.
    \end{split}
    \label{eq:app_lemma1_lv_bound}
  \end{equation}

  Assumption~\ref{thm:theory_a3} gives:

  \begin{equation*}
    \lVert e(x_t, u_t) \rVert \leq \delta(\zeta(x_t, u_t)),
  \end{equation*}

  but since we are in the nominal regime:

  \begin{equation}
    \zeta(x_t, u_t) \leq \zeta^{\star} \quad \Rightarrow \quad \lVert e(x_t, u_t) \rVert \leq \delta(\zeta^{\star}).
    \label{eq:app_lemma1_error_bound}
  \end{equation}

  Substitute (\ref{eq:app_lemma1_contract}), (\ref{eq:app_lemma1_lv_bound}) and (\ref{eq:app_lemma1_error_bound}) into (\ref{eq:app_lemma1_split}):

  \begin{equation}
    V(x_{t+1}) - V(x_t) \leq -c_3 \lVert x_t \rVert^2 + \varepsilon, \quad \varepsilon := L_V \delta(\zeta^{\star}).
    \label{eq:app_lemma1_combined}
  \end{equation}
  
  Therefore, whenever

  \begin{equation*}
     \lVert x_t \rVert ^2 \geq \frac{2 \varepsilon}{c_3},
  \end{equation*}

  we obtain

  \begin{equation*}
    V(x_{t+1}) - V(x_t) \leq -\frac{1}{2} \, c_3 \, \lVert x_t \rVert^2.
  \end{equation*}

  Thus the true dynamics are strictly contracting outside a ball of radius
  
  \begin{equation*}
    r := \sqrt{\frac{2 \varepsilon}{c_3}}.
  \end{equation*}
  
  Inside this ball, boundedness follows from continuity.

\end{proof}

\begin{proof}[Proof of Lemma~\ref{thm:theory_lemma2}]
  Assumption~\ref{thm:theory_a2} states that the excursion model satisfies
  
  \begin{equation}
    \lVert f^{\text{exc}}(x_t, u_t) \rVert \leq M (1 + \lVert x_t \rVert), \quad M \geq 1.
    \label{eq:app_lemma2_bound}
  \end{equation}

  Consider an uninterrupted sequence of $k \le H$ excursion steps and let

  \begin{equation*}
    z_k := \lVert x_{t+k} \rVert.
  \end{equation*}

  Then (\ref{eq:app_lemma2_bound}) implies the affine recursion

  \begin{equation*}
    z_{k+1} \leq M + M z_k.
  \end{equation*}

  Unrolling:

  \begin{flalign*}
    \quad z_1 \leq& \;M + M z_0,&&\\
    \quad z_2 \leq& \;M + M z_1 \leq M + M (M + M z_0),&&\\
    \quad z_3 \leq& \;M + M z_2 \leq M + M (M + M (M + M z_0)),&&\\
    \vdots& &&
  \end{flalign*}

  Collecting terms yields

  \begin{equation*}
    z_k \le M^k z_0 + M \sum_{i=0}^{k-1} M^i. 
  \end{equation*}

  Using the formula for the finite geometric series,
  \begin{equation*}
    \sum_{i=0}^{k-1} M^i = \frac{M^k - 1}{M - 1},
  \end{equation*}

  we obtain

  \begin{equation*}
    z_{k} \, \leq \, M^k z_0 \,+\, M \frac{M^k - 1}{M - 1}.
  \end{equation*}

  Thus for any excursion of duration $k \leq H$,

  \begin{equation*}
    \lVert x_{t+k} \rVert \, \leq \, M^k \lVert x_t \rVert \,+\, M \frac{M^k - 1}{M - 1}.
  \end{equation*}
  
  Since $H < \infty$, the right-hand side is finite, establishing finite-time boundedness of excursions.
  
  For the special case $M=1$, the recursion reduces to $z_{k+1} \le 1 + z_k$, yielding $z_k \le z_0 + k$, which is also finite for $k \le H$.

\end{proof}

\begin{proof}[Proof of Theorem~\ref{thm:theory_theorem1}]
  We combine the results of Lemmas~\ref{thm:theory_lemma1} and~\ref{thm:theory_lemma2}.

  \subsection*{1. Dynamics inside the nominal region}

  By Lemma~\ref{thm:theory_lemma1}, there exists a radius

  \begin{equation*}
    r := \sqrt{\frac{2 \varepsilon}{c_3}}, \quad \varepsilon = L_V \delta(\zeta^{\star}),
  \end{equation*}
  
  such that whenever $\lVert x_t \rVert \geq r$,

  \begin{equation*}
    V(x_{t+1}) - V(x_t) \leq -\frac{1}{2} \, c_3 \, \lVert x_t \rVert^2 < 0.
  \end{equation*}

  Hence, along every nominal step outside the compact ball $\lVert x_t \rVert < r$, the Lyapunov function decreases strictly. Consequently, during every nominal phase the Lyapunov function decreases until the trajectory reaches a compact neighborhood of the origin. Inside the ball $\lVert x_t \rVert < r$, boundedness is immediate.

  \subsection*{2. Dynamics during excursion phases}

  When $\zeta(x_t, u_t) > \zeta^{\star}$, the system follows the excursion model $f^{\text{exc}}$. Lemma~\ref{thm:theory_lemma2} implies that for any excursion of length $k \leq H$,

  \begin{equation*}
    \lVert x_{t+k} \rVert \, \leq \, M^k \lVert x_t \rVert \,+\, M \frac{M^k - 1}{M - 1}.
  \end{equation*}

  Thus excursions may increase the state norm but only by a uniformly bounded finite amount.

  \subsection*{3. Concatenation of contracting and expansive intervals}
  
  By Assumption~\ref{thm:theory_a4}, every excursion is transient and is followed by recovery to the nominal-confidence region. Lemma~\ref{thm:theory_lemma2} bounds the state throughout each excursion, while Lemma~\ref{thm:theory_lemma1} guarantees strict Lyapunov decrease during the subsequent nominal phase whenever $\lVert x_t \rVert \ge r$. Since excursion growth is uniformly bounded and every excursion is followed by recovery, repeated switching cannot accumulate into unbounded state growth. Therefore, there exists $C > 0$ such that
  
  \begin{equation*}
    \lVert x_t \rVert \leq C \;\; \forall t.
  \end{equation*}

\end{proof}

\end{document}